\def\fontsettingup{2} 
\newtheorem{theorem}{Theorem}
\newtheorem*{claim*}{Claim}
\newtheorem{fact}[theorem]{Fact}
\newtheorem{lemma}[theorem]{Lemma}
\newtheorem{proposition}[theorem]{Proposition}
\newtheorem{corollary}[theorem]{Corollary}
\theoremstyle{definition}
\newtheorem{definition}[theorem]{Definition}
\newtheorem{remark}[theorem]{Remark}
\newtheorem*{remark*}{Remark}
  \def\*#1{\mathbf{#1}} 
  \def\+#1{\mathcal{#1}} 
  \def\-#1{\mathrm{#1}} 
  \def\^#1{\mathbb{#1}} 
  \def\!#1{\mathfrak{#1}} 
  \def\*#1{\boldsymbol{#1}} 
  \def\+#1{\mathcal{#1}} 
  \def\-#1{\mathrm{#1}} 
  \def\^#1{\mathbb{#1}} 
  \def\!#1{\mathfrak{#1}} 
\def\oPr{\mathbf{Pr}}
\renewcommand{\Pr}[2][]{ \ifthenelse{\isempty{#1}}
  {\oPr\left[#2\right]}
  {\oPr_{#1}\left[#2\right]} } 
\def\oE{\mathbb{E}}
\newcommand{\E}[2][]{ \ifthenelse{\isempty{#1}}
  {\oE\left[#2\right]}
  {\oE_{#1}\left[#2\right]} }
\DeclareMathOperator*{\oVar}{\mathbf{Var}}
\newcommand{\Var}[2][]{ \ifthenelse{\isempty{#1}}
  {\oVar\left[#2\right]}
  {\oVar_{#1}\left[#2\right]} }
\def\oEnt{\mathbf{Ent}}
\newcommand{\Ent}[2][]{ \ifthenelse{\isempty{#1}}
  {\oEnt\left[#2\right]}
  {\oEnt_{#1}\left[#2\right]} }
\newcommand{\DTV}[2]{\-D_{\mathrm{TV}}\left({#1},{#2}\right)}
\renewcommand{\epsilon}{\varepsilon}
\renewcommand{\emptyset}{\varnothing}
\newcommand{\set}[1]{\left\{#1\right\}}
\newcommand{\tuple}[1]{\left(#1\right)} \newcommand{\eps}{\varepsilon}
\newcommand{\tp}{\tuple}
\newcommand{\abs}[1]{\left\vert#1\right\vert}
\newcommand{\defeq}{:=}
\newcommand{\numP}{\#{\textnormal{\textbf{P}}}}
\newcommand{\Zrel}[1][G,\*p]{Z_{\-{rel}}(#1)}
\newcommand{\murel}[1][G,\*p]{\mu^{\-{NR}}_{#1}}
\newcommand{\muM}[1][\+M,\*\lambda]{\mu_{#1}}
\newcommand{\muB}[1][\+B,\*\lambda]{\mu_{#1}}
\DeclareMathOperator{\rk}{rk}
\DeclareMathOperator{\wt}{wt}
\newcommand{\RCpolar}{\widehat{\pi}_{RC,q}}
\title{Near-linear time samplers for matroid independent sets with applications}
\date{}
\author{Xiaoyu Chen\thanks{State Key Laboratory for Novel Software Technology, Nanjing University, 163 Xianlin Avenue, Nanjing, Jiangsu Province, China. \textnormal{E-mails: \url{chenxiaoyu233@smail.nju.edu.cn}, \url{zhangxy@smail.nju.edu.cn}, \url{zou.zongrui@smail.nju.edu.cn}}}
  \and
  Heng Guo\thanks{School of Informatics, Edinburgh, EH8 9AB, UK. \textnormal{E-mail: \url{hguo@inf.ed.ac.uk}}. This project has received funding from the European Research Council (ERC) under the European Union’s Horizon 2020
  research and innovation programme (grant agreement No. 947778).}
\and 
Xinyuan Zhang\footnotemark[1]
\and 
Zongrui Zou\footnotemark[1]
}
\begin{document}

\maketitle

\begin{abstract}
  We give a $\widetilde{O}(n)$ time almost uniform sampler for independent sets of a matroid, whose ground set has $n$ elements and is given by an independence oracle. 
  As a consequence, one can sample connected spanning subgraphs of a given graph $G=(V,E)$ in $\widetilde{O}(\abs{E})$ time.
  This leads to improved running time on estimating all-terminal network reliability.
  Furthermore, we generalise this near-linear time sampler to the random cluster model with $q\le 1$.
\end{abstract}



\section{Introduction}
Let $\+M = ([n], \+I)$ be a matroid of rank $r$ and $\*\lambda \in \^R^n_{> 0}$ be the external fields (namely weights for the ground set elements).
Denote its set of bases by $\+B=\+B(\+M)$, and by $\+I=\+I(\+M)$ the set of independent sets.
Suppose that we want to sample a random bases $B\in\+B$ from the following distribution:
\begin{align*}
  \forall B \in \+B, \quad \muB(B) \propto \prod_{i \in B} \lambda_i.
\end{align*}
There is a natural Markov chain, namely the bases-exchange walk (also known as the down-up walk) \cite{FM92}, that converges to the distribution above.
Anari, Liu, Oveis Gharan, and Vinzant \cite{anari2019logconcaveII} showed that this chain mixes in polynomial time.
Subsequently, Cryan, Guo, and Mousa \cite{CGM21} and a follow up work by Anari, Liu, Oveis Gharan, Vinzant, and Vuong \cite{anari2021logconcave} refined the mixing time to the optimal $O(r\log r)$.

In this work, we focus on another important distribution associated with the matroid $\+M$, namely, the distribution $\muM$ over the independent sets of $\+M$: 
\begin{align} \label{eq:muM}
  \forall S \in \+I, \quad \muM(S) \propto \prod_{i \in S} \lambda_i.
\end{align}

As suggested by the previous work~\cite{anari2021logconcave}, in order to sample from $\muM$, 
we may construct another matroid $\+\+M_{\+I}$ so that there is a one-to-one correspondence between the bases of $\+M_{\+I}$ and the independent sets of $\+M$. 
Therefore, we may use the bases-exchange walk on $\+M_{\+I}$ to approximately generate samples from $\mu_{\+M,\*\lambda}$ within $O(n \log n)$ steps. 

However, an efficient implementation of the bases-exchange walk on $\+M_{\+I}$ is far from trivial.
Given an independence oracle $\+O_I$, the na{\"i}ve implementation requires $O(n)$ queries each step.
This prevents us from getting a near-linear time sampler for many potential applications.

To get faster algorithms, another oracle $\+O'$ is considered in~\cite{anari2021logconcave}.
$\+O'$ takes a set $S \subseteq [n]$ as input which contains at most one circuit.
If the circuit exists, $\+O'$ will output a uniformly random element on the circuit.
\cite{anari2021logconcave} showed that there is a sampling algorithm for $\muM$ with $O(n \log n)$ queries of oracle $\+O'$.
In the general case, the oracle $\+O'$ could be implemented by $O(r)$ calls of the independent oracle $\+O_I$, where $r$ is the rank of the matroid $\+M$.
This leads to an $O(rn \log n)$ time algorithm when $\+O_I$ answers each query in $O(1)$ time.

In this work, we give an $O(n \log^2 n)$ time sampler assuming that $\+O_I$ is equipped and it answers each query in $O(1)$ time.
This surpasses the current best $O(r n \log n)$ running time~\cite{anari2021logconcave} when $r = \Omega(\log n)$.
The metric between two distributions $\mu$ and $\nu$ over a finite space $\Omega$ we use is the \emph{total variation distance (TV distance)}: $\DTV{\mu}{\nu} := \frac{1}{2} \sum_{X \in \Omega} \abs{\mu(X) - \nu(X)}$.

\begin{theorem}\label{thm:main-theorem}
  Equipped with the independence oracle $\+O_I$ of a matroid $\+M = ([n], \+I)$, there exists an algorithm that takes external fields $\*\lambda \in \mathbb{R}_{>0}^n$ and $\epsilon \in (0,1)$ as inputs, 
  and outputs a random set $S \in \+I$ satisfying $\DTV{\muM}{S} \leq \epsilon$.
  It runs in $O\tp{(1 + \lambda_{\max})n \log(n/\epsilon) (\log n + t_{\+O_I})}$ time in expectation,
  where $t_{\+O_I}$ is the time to answer a query by the independence oracle $\+O_I$ and $\lambda_{\max} := \max_{i \in [n]} \lambda_i$.
\end{theorem}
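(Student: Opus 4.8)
The plan is to carry out the homogenisation sketched above and then concentrate all of the effort on a fast implementation of a single Markov chain step. Using the matroid $\+M_{\+I}$ of \cite{anari2021logconcave}, whose bases are in weight-preserving bijection with $\+I(\+M)$, it suffices to approximately sample a basis of $\+M_{\+I}$ under the induced external fields; the bases-exchange (down-up) walk on $\+M_{\+I}$ does this within $T=O(n\log(n/\epsilon))$ steps by \cite{CGM21,anari2021logconcave}. Thus the theorem reduces to showing that one step of this walk can be carried out in $O((1+\lambda_{\max})(\log n+t_{\+O_I}))$ expected time, since multiplying by $T$ then gives the stated bound. I would first apply the harmless reduction of rounding each $\lambda_i$ to $O(\log(n/\epsilon))$ bits and deleting elements whose field falls below $\epsilon/\mathrm{poly}(n)$, which perturbs $\muM$ by at most $\epsilon/2$ in total variation and lets the data structures below operate on bounded-precision weights.

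A step removes a uniformly random element $b$ from the current basis $B$ and re-adds some $e$ with probability proportional to its field among all $e$ for which $(B\setminus\{b\})\cup\{e\}$ is again a basis; equivalently it can be realised by adding a field-weighted element and then deleting a uniformly random element of the circuit it creates. The naive cost per step is $\Theta(n)$ oracle calls, incurred in (a) locating the at-most-one circuit and (b) sampling a field-weighted element of the ground set. For (b) I would maintain a balanced search tree over the ground set of $\+M_{\+I}$, augmented with subtree sums of the fields, supporting field-weighted sampling from a dynamically changing subset and point updates in $O(\log n)$ time. For (a), which is where the real improvement lies, I would implement the circuit oracle $\+O'$ of \cite{anari2021logconcave}---whose black-box cost is $O(r)$ calls to $\+O_I$---in only $O(\log n)$ calls to $\+O_I$: keep the current independent set in a balanced tree ordered by independent uniform random priorities, and binary-search for the shortest priority-prefix whose union with the newly added element is dependent; the last element of that prefix is then a uniformly random element of the circuit. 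A step then performs $O(1)$ tree updates, one such binary search, and the field-weighted proposals described next, for $O(\log n+t_{\+O_I})$ per proposal.

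The external fields enter the running time only through the number of proposals. To realise the field-weighted re-add over the (implicitly described) set of legal elements, I would propose a field-weighted element of the whole ground set of $\+M_{\+I}$, test legality with $O(1)$ calls to $\+O_I$ via $\+O'$, accept if legal and retry otherwise; since $\+M_{\+I}$ carries a pool of unit-field ``dummy'' elements that are always legal re-adds, a short calculation lower-bounds the acceptance probability by $\Omega(1/(1+\lambda_{\max}))$, so the expected number of proposals per step is $O(1+\lambda_{\max})$ and the expected cost of a step is $O((1+\lambda_{\max})(\log n+t_{\+O_I}))$.

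I expect the main obstacle to be the $O(\log n)$-oracle-call simulation of $\+O'$: one must argue that the priority-binary-search really returns a uniformly random circuit element, and---since re-randomising all priorities each step would cost $\Theta(n)$---that a lazy re-randomisation (refreshing a priority only when an element re-enters the set) retains enough independence for correctness while staying within the $O(\log n)$-per-step update budget, which interacts subtly with the fact that the set of currently present elements already depends on past priorities. A secondary point is pinning down the $\Omega(1/(1+\lambda_{\max}))$ acceptance bound together with a proposal distribution whose accepted sample has exactly the right conditional law. Granting these, linearity of expectation over the $T$ steps yields the claimed expected running time, and the mixing bound of \cite{CGM21,anari2021logconcave} gives $\DTV{\muM}{S}\le\epsilon$.
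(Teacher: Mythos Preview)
Your core idea---homogenise, run the down-up walk for $O(n\log(n/\epsilon))$ steps, and implement the re-add step by rejection sampling against an always-legal dummy pool---is exactly what the paper does. But the paper's route is much simpler than yours and avoids the part you flag as the main obstacle. The paper never touches the circuit oracle $\+O'$. It polarises directly (Lemmas~\ref{lem:slc-gh} and~\ref{lem:slc-gp}): add $n$ auxiliary elements $Y=\{y_1,\dots,y_n\}$ and work with $\pi(A\cup B)\propto\binom{n}{\abs{A}}^{-1}\prod_{x_i\in A}\lambda_i$ over $n$-subsets of $X\cup Y$ with $A\in\+I$; this is strongly log-concave but \emph{not} of product form $\prod\lambda_i'$, so it is not literally a weighted bases-exchange walk. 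For the re-add the paper proposes $e\sim\nu$ with weight $\lambda_i$ on $x_i\in X\setminus T$ and state-dependent weight $\tfrac{n-\abs{T\cap X}}{1+\abs{T\cap X}}$ on each $y\in Y\setminus T$, accepting iff $(T\cup\{e\})\cap X\in\+I$---a single $\+O_I$ query. No circuit is ever located, so your priority-binary-search simulation of $\+O'$ (with its acknowledged issues: lazy re-randomisation, and returning a uniform element of $C\setminus\{e\}$ rather than $C$) is simply not used.

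Conversely, the point you call ``secondary'' is the crux. With $\abs{T\cap X}=k$ one has $\abs{Y\setminus T}=k+1$ and $\abs{X\setminus T}=n-k$, and the state-dependent $y$-weight is chosen precisely so that the total $\nu$-mass on $Y\setminus T$ equals $n-k=\abs{X\setminus T}$; this yields $\Pr[\nu]{\text{reject}}\le\sum_{x_i\in X\setminus T}\lambda_i/\sum_{x_i\in X\setminus T}(1+\lambda_i)\le\lambda_{\max}/(1+\lambda_{\max})$. With unit-weight dummies this cancellation fails: the dummy mass is only $k+1$ against real mass up to $(n-k)\lambda_{\max}$, so when $k$ is small the acceptance probability can drop to order $1/(n\lambda_{\max})$. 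Your ``short calculation'' therefore does not go through as stated, and designing the proposal $\nu$---not simulating $\+O'$---is where the actual content lies. (The bit-rounding and small-field deletion you propose are also unnecessary.)
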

The proof of \Cref{thm:main-theorem} is given in \Cref{sec:algo}.
\begin{remark} \label{remark:data-structure}
  In particular, instead of the independence oracle $\+O_I$,
  for our algorithm in \Cref{thm:main-theorem},
  it suffices to have a data structure maintaining a set $S \subseteq [n]$ which supports:
  \begin{itemize}
  \item to \emph{insert} an element to $S$;
  \item to \emph{delete} an element from $S$;
  \item and to \emph{query} if $S \in \+I$.
  \end{itemize}
  Given such a data structure, $t_{\+O_I}$ in \Cref{thm:main-theorem} can be substituted by the worst case or amortized running time of these operations.
\end{remark}


The crux of \Cref{thm:main-theorem} is a fast implementation of the transition step of the bases-exchange chain for $\+M_{\+I}$.
Note that the transition of a Markov chain is in itself yet another sampling problem.
We design a simple rejection sampling procedure for this latter sampling task.
There is a constant upper bound for the rejection probability (see \cref{lem:rej-ub}),
guaranteeing its efficiency.

A closely related problem is to approximate the \emph{all-terminal network reliability}.
Given a connected undirected graph $G = (V, E)$ 
and failure probabilities $\*p \in \^R^E_{>0}$,
the all-terminal network reliability $\Zrel$ is the probability that 
the graph is connected if each edge $e$ fails (i.e.~is removed) independently with probability $p_e$.
Formally, for $S \subseteq E$, let
\begin{align} \label{eq:weight-S}
  \wt(S) \defeq \*1[\text{$G[E\setminus S]$ is connected}] \cdot \prod_{e \in S} p_e \prod_{f \in E\setminus S} (1 - p_f),
\end{align}
where $G[E\setminus S]$ is the spanning subgraph of $G$ on $E\setminus S$.
Then, the reliability of the network is
\begin{align*}
  \Zrel \defeq \sum_{S\subseteq E} \wt(S).
\end{align*}
By standard techniques~\cite{jerrum1986random, stefankovic2009adaptive, kolmogorov2018faster},
estimating $\Zrel$ can be reduced to approximate sampling of the (weighted) distribution of connected spanning subgraphs:
\begin{align} \label{eq:def-murel}
  \forall S \subseteq E, \quad \murel(S) \propto \wt(S).
\end{align}

The study of the computational complexity of network reliability was initiated by Valiant \cite{Val79}.
Exact evaluation of the all-terminal version is known to be \numP-hard~\cite{Jerrum81, provan1983complexity}.
Guo and Jerrum~\cite{guo2019polynomial} gave the first fully polynomial-time randomized approximate scheme (FPRAS) using the \emph{partial rejection sampling} framework \cite{GJL19}.
This algorithm samples from $\murel$ in $O(\abs{E} + \frac{p_{\max} \abs{V}\abs{E}}{1 - p_{\max}})$ time in expectation~\cite{guo2020tight} where $p_{\max} := \max_i p_i$ is the maximum failure probability,
and this bound is tight for the technique.
%
It is also worth mentioning that, using the result in~\cite{anari2021logconcave} directly, it is possible to get an $\widetilde{O}(\abs{V}\abs{E})$ time sampler, 
whose running time is of roughly the same order as the partial rejection sampling algorithm.

Using \Cref{thm:main-theorem}, we obtain an $\widetilde{O}(\abs{V})$ speed-up to sample from $\murel$.
This gives the first near-linear time sampler for connected spanning subgraphs.
\begin{corollary}\label{cor:net-rel}
  Let $G=(V, E)$ be a connected graph with $n$ vertices and $m$ edges.
  Let $\*p \in (0, 1)^E$ be the failure probabilities for edges.
  There is an algorithm that takes $G$, $\*p$ and $\epsilon \in (0, 1)$ as input, and outputs a random subset $S\subseteq E$ such that $\DTV{\murel}{S} \leq \epsilon$ in $O\tp{\frac{m(\log^3n+\log\frac{1}{\eps})}{1-p_{\max}}}$ time in expectation, where $p_{\max} := \max_{e\in E} p_e$.
\end{corollary}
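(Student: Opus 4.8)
The plan is to realise $\murel$ as the independent‑set distribution of a suitable matroid and then invoke \Cref{thm:main-theorem}. Let $\+M(G)$ be the graphic matroid of $G$ on ground set $E$, and let $\+M^{*}$ be its dual, the cographic matroid of $G$, again on ground set $E$ with $\abs{E}=m$ elements. The first step is the standard duality fact: for $S\subseteq E$ the spanning subgraph $G[E\setminus S]$ is connected if and only if $E\setminus S$ is a spanning set of $\+M(G)$, equivalently if and only if $S\in\+I(\+M^{*})$. Using \eqref{eq:weight-S}, for $S\in\+I(\+M^{*})$ we then have
\[
  \wt(S)=\prod_{e\in S}p_{e}\prod_{f\in E\setminus S}(1-p_{f})=\Bigl(\prod_{f\in E}(1-p_{f})\Bigr)\prod_{e\in S}\frac{p_{e}}{1-p_{e}},
\]
and $\wt(S)=0$ otherwise. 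Hence, setting the external fields $\lambda^{*}_{e}\defeq p_{e}/(1-p_{e})\in\^R_{>0}$, the distribution $\murel$ on $2^{E}$ coincides with $\mu_{\+M^{*},\*\lambda^{*}}$, so it suffices to run the sampler of \Cref{thm:main-theorem} on $\+M^{*}$ with fields $\*\lambda^{*}$ and error $\epsilon$ and return its output verbatim.

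Next I would supply \Cref{thm:main-theorem} with access to $\+M^{*}$. By \Cref{remark:data-structure} it is enough to maintain a set $S\subseteq E$ under insertions and deletions while answering whether $S\in\+I(\+M^{*})$, i.e.\ whether $G[E\setminus S]$ is connected. Maintaining $E\setminus S$ is exactly a dynamic graph on the fixed vertex set $V$ undergoing edge deletions (when an element is inserted into $S$) and edge insertions (when an element is deleted from $S$), and the membership query asks whether the number of connected components equals $1$. A standard dynamic connectivity data structure (Holm, de Lichtenberg and Thorup) supports each such operation in $O(\log^{2}n)$ amortised time, so one may take $t_{\+O_{I}}=O(\log^{2}n)$; an amortised bound is fine here since \Cref{thm:main-theorem} measures running time in expectation. (Loops and parallel edges of $G$ become coloops and parallel elements of $\+M^{*}$ and are harmless.)

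Finally I would assemble the running time: the ground set has $m$ elements, $\lambda^{*}_{\max}=p_{\max}/(1-p_{\max})$ so $1+\lambda^{*}_{\max}=1/(1-p_{\max})$, and $\log m=O(\log n)$. Plugging these into \Cref{thm:main-theorem} yields expected running time
\[
  O\!\left(\frac{m\log(m/\epsilon)\bigl(\log n+\log^{2}n\bigr)}{1-p_{\max}}\right)=O\!\left(\frac{m\bigl(\log^{3}n+\log^{2}n\log\tfrac1\epsilon\bigr)}{1-p_{\max}}\right),
\]
which already gives the claimed bound whenever $\epsilon\ge 1/\operatorname{poly}(n)$. I expect the main obstacle to be matching the stated $\log(1/\epsilon)$ dependence for very small $\epsilon$, i.e.\ removing the extra $\operatorname{polylog}(n)$ factor on the $\log(1/\epsilon)$ term: my plan would be to run the sampler with error $\tfrac12$ first (at cost $O(m\log^{3}n/(1-p_{\max}))$) and then argue that, once the chain is within constant total variation distance, each of the remaining $O(m\log\tfrac1\epsilon)$ transition steps can be charged at $O(1)$ amortised cost — a step perturbs $S$ by $O(1)$ elements and, by the constant rejection probability of \cref{lem:rej-ub}, triggers only $O(1)$ expected oracle calls, so the dynamic‑connectivity work during this boosting phase does not blow up. By contrast, the matroid‑duality reduction and the reduction of the oracle to dynamic connectivity are routine once set up.
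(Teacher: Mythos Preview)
Your reduction is exactly the paper's: identify $\murel$ with $\mu_{\+M^{*},\*\lambda^{*}}$ for the cographic matroid and $\lambda^{*}_{e}=p_{e}/(1-p_{e})$, implement the independence oracle via a fully dynamic connectivity data structure with amortised $O(\log^{2}n)$ per operation, and invoke \Cref{thm:main-theorem}. The paper cites \cite{wulff2013faster} instead of Holm--de Lichtenberg--Thorup, but the role is identical.

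Where you diverge is in the final accounting. You correctly compute that the direct plug-in yields
\[
  O\!\left(\frac{m\log(m/\epsilon)\log^{2}n}{1-p_{\max}}\right)=O\!\left(\frac{m\bigl(\log^{3}n+\log^{2}n\log\tfrac{1}{\epsilon}\bigr)}{1-p_{\max}}\right),
\]
which carries an extra $\log^{2}n$ factor on the $\log(1/\epsilon)$ term compared to the stated bound. The paper does \emph{not} address this: its proof ends with ``The corollary follows by combining this with \Cref{thm:main-theorem}'', so the stated dependence on $\epsilon$ appears to be a minor slip (or an implicit assumption that $\epsilon\ge 1/\mathrm{poly}(n)$).

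Your proposed fix, however, does not work. The two-phase argument cannot bring the per-step cost down to $O(1)$: every transition of the down-up walk must issue at least one independence query (Line~\ref{line:reject} of \Cref{alg:rej-sampling}), and that query costs $\Theta(\log^{2}n)$ amortised regardless of how close the chain already is to stationarity. Proximity in total variation distance has no bearing on the cost of simulating a single step. So the honest bound from this route is $O\bigl(m\log^{2}n\log(m/\epsilon)/(1-p_{\max})\bigr)$; drop the attempted sharpening and your proof is complete and matches the paper's.
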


\begin{proof}
  Let $\Omega := \{S \subseteq E \mid \murel(S) > 0\}$ be the support of $\murel$.
  Recall that $\murel(S) > 0$ if and only if $G[E\setminus S]$ is connected.
  This means that there is a spanning tree $T$ of $G$ contained in $E \setminus S$.
  Note that spanning trees are bases of the graphic matroid $\+M_G$ of a graph $G$.
  Hence, let $\+M_{\-{NR}} := (E, \Omega)$, it holds that $\+M_{\-{NR}}$ is the dual matroid of $\+M_G$, namely the co-graphic matroid.
  This also means that $\murel = \muM$ for $\+M = \+M_{\-{NR}}$ and $\lambda_e = \frac{p_e}{1-p_e}, \forall e \in E$ as defined in \eqref{eq:muM}.

  It remains to implement the independence oracle efficiently.
  For this we use dynamic data structures for connectivity of graphs,
  which is a topic that has been extensively studied.
  For $\+M_{\-{NR}}$, as in \Cref{remark:data-structure},
  we implement the independence oracle $\+O_I$ with amortized cost $t_{\+O_I} = O(\log^2n)$ by using the data structure in~\cite[Section 3]{wulff2013faster} directly.
  The corollary follows by combining this with \Cref{thm:main-theorem}.
\end{proof}

Using the counting to sampling reduction in \cite{guo2020tight},
\Cref{cor:net-rel} implies an FPRAS that outputs an $(1\pm\eps)$-approximation of $\Zrel$ in time $O\tp{\frac{mn\log^4(n)}{\eps^2(1-p_{\max})}\log\frac{1}{1-p_{\max}}}$.
As before, this improves the previous best running time by a factor of $\widetilde{O}(n)$.
We also note that the running time in \Cref{cor:net-rel} is linear in $(1-p_{max})^{-1}$,
whereas the na\"ive implementation of the down-up walk has logarithmic dependence.
We leave improving this dependence for near-linear time samplers as an open problem.

The distribution $\murel$ in \eqref{eq:def-murel} is a special case of the random cluster model on the graph $G$ with parameter $q=0$ \cite{FK72}.
More generally, for a matroid $\+M=([n],\+I)$ with a rank function $\rk(\cdot)$,
the random cluster model with parameter $q\ge 0$ and external fields $\*\lambda \in \^R^n_{> 0}$ is defined as follows:
for $S\subseteq X$,
\begin{align}  \label{eqn:RC}
  \pi_{RC,q}(S) \propto q^{-\rk(S)} \prod_{x_i\in S}\lambda_i.
\end{align}
For $q=0$, the support of the distribution in \eqref{eqn:RC} must have the highest rank.
For a graphic matroid over a graph $G$, this means that $G[S]$ must be connected,
namely $S$ corresponds to $E\setminus S$ in $\murel$.

We also extends our near-linear time sampler to random cluster models with $q\le 1$.
Note that here we need a rank oracle instead of the independence oracle.

\begin{theorem}  \label{thm:RC}
  Let $0\le q\le 1$ be a parameter.
  Equipped with the rank oracle $\+O_r$ of a matroid $\+M = ([n], \+I)$, there exists an algorithm that takes external fields $\*\lambda \in \mathbb{R}_{>0}^n$ and $\epsilon \in (0,1)$ as inputs, 
  and outputs a random set $S$ such that $\DTV{\pi_{RC,q}}{S} \leq \epsilon$.
  It runs in $O\tp{(1 + \lambda_{\min}^{-1})n \log(n/\epsilon) (\log n + t_{\+O_r})}$ time in expectation,
  where $t_{\+O_r}$ is the time to answer a query by the rank oracle $\+O_r$ and $\lambda_{\min} \defeq \min_{i \in [n]} \lambda_i$.
\end{theorem}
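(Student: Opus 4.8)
The plan is to reduce the random cluster model with $q \le 1$ to the matroid independent-set sampling problem of \Cref{thm:main-theorem}, in the same spirit as the proof of \Cref{cor:net-rel}, but now taking advantage of the extra freedom that $q \le 1$ gives us. First I would observe that for $q \le 1$ the weight $q^{-\rk(S)}$ is increasing in $\rk(S)$, so the distribution $\pi_{RC,q}$ puts \emph{more} mass on higher-rank sets; intuitively this pushes the distribution toward sets containing a basis, which is exactly the regime where a co-matroid (independent-set) structure is available. The technical vehicle I would use is to pass to the \emph{dual matroid} $\+M^*$: if $S \subseteq [n]$ and $\overline{S} = [n] \setminus S$, then the classical identity $\rk_{\+M}(S) = |S| - r + \rk_{\+M^*}(\overline{S})$ (where $r = \rk_{\+M}([n])$) lets me rewrite
\begin{align*}
  \pi_{RC,q}(S) \propto q^{-\rk_{\+M}(S)} \prod_{i \in S} \lambda_i
  = q^{-|S|} \, q^{-\rk_{\+M^*}(\overline S)} \prod_{i\in S}\lambda_i \cdot (\text{const})
  \propto q^{-\rk_{\+M^*}(\overline S)} \prod_{i \in \overline S} \lambda_i^{-1},
\end{align*}
so that, writing $\lambda_i' = \lambda_i^{-1}$, the law of $\overline S$ is precisely a random cluster distribution on the dual matroid $\+M^*$ with the same parameter $q$ and reciprocal fields. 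This is a red herring on its own — it does not remove the rank weight — so the real step is the next one.

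The key step is to expand the rank weight itself. For any matroid, $q^{-\rk(S)}$ (up to a global constant depending only on $q$ and $r$) equals a nonnegative-coefficient generating quantity over independent sets contained in $S$ when $q \le 1$; concretely I would use that $\rk(S)$ is the size of a maximum independent subset of $S$ and, via the transfer-matrix / Tutte-polynomial-style identity, write $q^{-\rk(S)}$ as a sum over independent sets $I \subseteq S$ with signs that become genuinely nonnegative precisely because $1/q \ge 1$. A cleaner route that I expect to actually carry out: introduce an auxiliary ground set consisting of $[n]$ together with extra ``phantom'' elements, and build a matroid $\+M'$ whose independent-set distribution $\mu_{\+M',\*\lambda'}$, after projecting away the phantom elements, equals $\pi_{RC,q}$ on $\+M$. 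The phantom elements encode the $q^{-\rk}$ factor: each independent set of $\+M$ of rank $k$ should be ``completed'' in $\binom{\text{something}}{\ }$ ways that sum to the correct power of $1/q$. This is analogous to how the co-graphic matroid absorbed the connectivity constraint with $q=0$; for general $q \le 1$ one needs roughly $\Theta(1/q)$ parallel phantom copies, which is why the final running time carries the $\lambda_{\min}^{-1}$ factor rather than $\lambda_{\max}$ — after taking the dual and reciprocal fields, $\lambda_{\max}$ of the transformed instance is $\lambda_{\min}^{-1}$ of the original.

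Once the reduction matroid $\+M'$ is in place, the remaining steps are bookkeeping: (i) verify that an independence query on $\+M'$ can be answered with $O(1)$ calls to the rank oracle $\+O_r$ of $\+M$ plus $O(\log n)$ overhead (checking independence among the phantom copies is trivial, and the interaction with $[n]$ reduces to a rank computation), so that $t_{\+O_I}$ for $\+M'$ is $O(\log n + t_{\+O_r})$; (ii) check that $\+M'$ has ground set of size $O(n)$ — here I must be careful that the number of phantom elements does not blow up with $1/\eps$ or with $n$ beyond a constant factor times $n$, which is why I would bound the phantom multiplicity by $O(1)$ and instead let the $q^{-\rk}$ growth be handled by the external fields on the phantoms, pushing the $1/q$ dependence into $\lambda_{\min}^{-1}$; and (iii) invoke \Cref{thm:main-theorem} with error parameter $\eps$, noting that the projection map $S' \mapsto S' \cap [n]$ is a deterministic contraction and hence does not increase total variation distance, so $\DTV{\pi_{RC,q}}{S} \le \eps$ follows immediately.

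The main obstacle I anticipate is step (ii) of the construction: designing the auxiliary matroid $\+M'$ so that (a) it is genuinely a matroid (the exchange axiom must survive the addition of phantom elements and their incidences), (b) its ground set has only $O(n)$ elements, and (c) the induced weights reproduce $q^{-\rk_{\+M}(S)} \prod_{i\in S}\lambda_i$ exactly rather than approximately. Getting all three simultaneously is the crux; I expect the right gadget is a truncation or a direct sum of $\+M^*$ with a uniform matroid on phantom elements, with the fields on the phantom elements tuned to $q^{-1}$, and the correctness reduces to a short computation with the rank function of a matroid union. Everything downstream of a correct $\+M'$ is routine and mirrors \Cref{cor:net-rel}.
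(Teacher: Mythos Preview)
Your proposal has a genuine gap, and the approach is quite different from the paper's.

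The entire weight of your argument rests on constructing an auxiliary matroid $\+M'$ on $O(n)$ elements whose independent-set distribution projects to $\pi_{RC,q}$. You yourself identify this as ``the crux'' and ``the main obstacle,'' but you never actually build it; you only gesture at ``a truncation or a direct sum of $\+M^*$ with a uniform matroid on phantom elements.'' This is not a proof sketch --- it is a statement of a hoped-for lemma with no argument. And there are concrete reasons to doubt the construction exists with the properties you need. To reproduce the factor $q^{-\rk(S)}$ by counting lifts, a set $S$ of rank $k$ would have to lift to $\Theta(q^{-k})$ many configurations in $\+M'$; if you push this multiplicity into external fields instead (as you suggest, to keep the ground set at $O(n)$ elements), then the fields alone must distinguish sets by their rank, which is exactly what fields on a product weight $\prod_i \lambda_i$ cannot do. A direct sum with a uniform matroid does not help either, since the rank of $T\cap X$ in $\+M$ is not recoverable from the rank of $T$ in a direct sum. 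In short, the step you defer is the whole proof, and the hints you give do not lead to a construction that satisfies your conditions (a)--(c) simultaneously.

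The paper takes a different and more direct route: it does \emph{not} reduce to \Cref{thm:main-theorem}. Instead it polarizes $\pi_{RC,q}$ itself (exactly as in \Cref{sec:mixing}) and runs the \emph{up-down} walk on the polarized distribution $\RCpolar$. The up step adds a uniformly random element; the down step removes an element $e$ with probability proportional to $\wt(T\setminus\{e\})$, which after normalization is $\lambda_j^{-1}$ if $e=x_j$ does not drop the rank, $q\lambda_j^{-1}$ if it does, and $\abs{T\cap X}/\abs{T\cap Y}$ if $e\in Y$. This is implemented by a rejection sampler entirely analogous to \Cref{alg:rej-sampling}: propose $e$ ignoring the rank condition, then reject with probability $1-q$ when removing $e$ would lower the rank. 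A one-line calculation bounds the rejection probability by $(1-q)/(1+\lambda_{\min})$, which is where the $1+\lambda_{\min}^{-1}$ factor actually originates --- not from any dual/reciprocal transformation. Rapid mixing is inherited from the strong log-concavity of the homogenized random-cluster generating polynomial for $q\le 1$, established by Br{\"a}nd{\'e}n and Huh. The rank oracle enters only to test whether $\rk(T\cap X\setminus\{e\})=\rk(T\cap X)$ in the rejection step.
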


\Cref{thm:RC} is proved in \Cref{sec:RC}.

Similar to \Cref{remark:data-structure},
it suffices to replace the rank oracle by a data structure that supports insertion/deletion of elements and query if the rank changes after removing an element.
For graphs, this can be implemented using the data structure in \cite{wulff2013faster}.
This is because $\rk(S)=\abs{V}-1+\kappa(E)-\kappa(S)$, where $\kappa(S)$ is the number of connected components in $G[S]$.
Thus the rank change query in graphs is exactly the same as asking if $u$ and $v$ are connected after removing an edge $(u,v)$.
The amortized cost of using this data structure is $O(\log^2 n)$.

\begin{corollary}\label{cor:RC-graph}
  Let $G=(V, E)$ be a graph with $n$ vertices and $m$ edges.
  Let $q\le 1$ be a parameter and $\*\lambda \in \mathbb{R}_{>0}^E$ be the external fields on edges.
  There is an algorithm that takes $G$, $\*\lambda$ and $\epsilon \in (0, 1)$ as input, and outputs a random subset $S\subseteq E$ such that $\DTV{\pi_{RC,q}}{S} \leq \epsilon$ in $O\tp{(1 + \lambda_{\min}^{-1})m (\log^3n+\log\frac{1}{\eps})}$ time in expectation,
  where $\lambda_{\min} \defeq \min_{i \in [n]} \lambda_i$.
\end{corollary}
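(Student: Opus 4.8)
The plan is to instantiate Theorem~\ref{thm:RC} with $\+M$ taken to be the graphic matroid $\+M_G$ of the input graph $G=(V,E)$, with ground set $E$ of size $m$, so that $\pi_{RC,q}$ is exactly the distribution in \eqref{eqn:RC} on subsets of $E$. The only thing that Theorem~\ref{thm:RC} leaves open in this concrete setting is a fast realisation of the rank oracle $\+O_r$. Following the remark after Theorem~\ref{thm:RC}, I would not implement a general rank oracle but instead only the weaker data-structural primitive the algorithm actually needs: maintain the current set $S\subseteq E$ under insertions and deletions of edges, and answer, for an edge $e=(u,v)\in S$, whether $\rk(S)$ drops when $e$ is removed. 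Via the standard identity $\rk(S)=\abs V-1+\kappa(E)-\kappa(S)$ for a graphic matroid (where $\kappa(\cdot)$ counts connected components of the corresponding spanning subgraph), this is equivalent to asking whether $u$ and $v$ lie in the same component of $G[S\setminus\{e\}]$, i.e.\ whether $e$ is a bridge of $G[S]$.

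Next I would cite the dynamic graph connectivity data structure of Wulff-Nilsen~\cite{wulff2013faster}, which supports edge insertions, edge deletions, and ``are $u$ and $v$ connected?'' queries in $O(\log^2 n)$ amortized time. Plugging this in gives $t_{\+O_r}=O(\log^2 n)$ (amortized), which by Remark-style reasoning is exactly what Theorem~\ref{thm:RC} permits in place of a worst-case oracle cost, since the running-time bound there is an expectation over the randomness of the algorithm and the data-structure operations are performed in a fixed, algorithm-determined sequence so amortization composes correctly with the expectation.

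Finally I would substitute $n\gets m$ (the size of the ground set is $\abs E=m$) and $t_{\+O_r}=O(\log^2 n)$ into the time bound $O\!\big((1+\lambda_{\min}^{-1})\,n\log(n/\epsilon)\,(\log n+t_{\+O_r})\big)$ from Theorem~\ref{thm:RC}. Using $\log(m/\epsilon)\le \log m+\log(1/\epsilon)$ and $\log n+t_{\+O_r}=O(\log^2 n)$, and absorbing $\log m\le\log n$ (or noting $m=\mathrm{poly}(n)$ so the two are interchangeable up to constants in the exponents of the logs), the bound collapses to $O\!\big((1+\lambda_{\min}^{-1})\,m\,(\log^3 n+\log\frac1\epsilon)\big)$, which is the claimed running time. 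I expect no real obstacle here: the content is entirely in Theorem~\ref{thm:RC}, and the corollary is a matter of checking that the graphic-matroid rank query reduces to a single dynamic-connectivity query and then bookkeeping the logarithmic factors. The one point worth stating carefully is the translation between the rank oracle and the bridge/connectivity query, so that the $O(\log^2 n)$ amortized cost from~\cite{wulff2013faster} legitimately plays the role of $t_{\+O_r}$.
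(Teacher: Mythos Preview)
Your proposal is correct and follows essentially the same route as the paper: the paper does not give a separate formal proof of this corollary, but the paragraph preceding it already spells out exactly your argument --- instantiate \Cref{thm:RC} with the graphic matroid, reduce the rank-drop query to a connectivity (bridge) query via $\rk(S)=\abs{V}-\kappa(S)$, and plug in the $O(\log^2 n)$ amortized dynamic-connectivity data structure from~\cite{wulff2013faster}. Two small quibbles: your ``$\log m\le\log n$'' should be ``$\log m=O(\log n)$'' (since $m$ may exceed $n$), and the arithmetic $(\log m+\log\frac1\eps)\cdot O(\log^2 n)$ actually gives $O(\log^3 n+\log^2 n\cdot\log\frac1\eps)$ rather than $O(\log^3 n+\log\frac1\eps)$ --- but the paper's stated bound makes the same simplification, so you are matching it.
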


\section{Preliminaries}
\subsection{Matroid}
Matroid is an abstract combinatorial structure that generalizes the notion of linear independence.
It is usually specified by a pair $\+M = (U, \+I)$ where $U$ is a ground set and $\+I \subseteq 2^U$ is a collection of subsets of $U$.
The subsets in $\+I$ are known as the independent sets of the matroid and satisfy the following axioms:
\begin{itemize}
\item $\emptyset \in \+I$;
\item if $S \in \+I$, $T \subseteq S$, then $T \in \+I$;
\item if $S, T \in \+I$ and $\abs{S} > \abs{T}$, then there is an element $i \in S\setminus T$ such that $T \cup \{i\} \in \+I$.
\end{itemize}
The first axiom ensures that $\+I$ is non-empty.
The second shows that $\+I$ is downward closed, and the third implies that the cardinality of maximal independent sets are the same.
This maximum cardinality is known as the \emph{rank} of the matroid $\+M$.
The set of \emph{bases} $\+B = \+B(\+M)$ is the collection of independent sets of maximum cardinality.
The rank also extends as a function to subsets of $U$.
For $S\subseteq U$, $\rk(S)$ is defined as the size of the maximum independent sets contained in $S$.

Given a matroid $\+M = (U, \+I)$, its \emph{dual matroid} $M^\star = (U, \+I^\star)$ has the same ground set $U$ with the collection of independent sets $\+I^\star := \{S \subseteq U \mid \exists B \in \+B(\+M), B\subseteq U\setminus S\}$.
By definition, every base $B^\star$ of $M^\star$ is the complement of the base $B = U \setminus B^\star$ of $\+M$ and vice versa.



%

\subsection{Strongly log-concave polynomial}
Let $f \in \^R[x_1,x_2,\ldots,x_n]$ be a polynomial with non-negative coefficients. 
$f$ is called 
\begin{itemize}
  \item \emph{$r$-homogeneous} if the degree of every monomial in $f$ is $r$;
  \item \emph{multiaffine} if every variable appears with degree no more than $1$;
  \item \emph{log-concave over the first orthant} (or \emph{log-concave} for short) if $\log f$ is concave over $\^R_{> 0}^n$, i.e., for $x, y \in \^R^n_{>0}$ and $\lambda \in (0, 1)$,
  \begin{align*}
    f(\lambda x + (1 - \lambda)y) \geq f(x)^\lambda f(y)^{1-\lambda};
  \end{align*}
  \item \emph{strongly log-concave} if $f$ is either vanishes or log-concave after taking any sequence of partial derivatives.
\end{itemize}

The notion of strong log-concavity is introduced by Gurvitz \cite{Gur09a,Gur09b}.
For a homogeneous polynomial,
it turns out to be equivalent to related notions of \emph{complete log-concavity} by Anari, Oveis Gharan, and Vinzant \cite{anari2018logconcaveI} and \emph{Lorentzian} by Br{\"a}nd{\'e}n and Huh \cite{branden2020lorentzian}.
See~\cite[Theorem 2.30]{branden2020lorentzian}.
For simplicity, we will not define the latter two.

A well known fact is that affine transform $T: \^R^m \to \^R^n$ (i.e., $T(\*y) = A\*y + \*b$ for some $A \in \^R^{n \times m}$ and $\*b \in \^R^n$) preserves log-concavity.
\begin{lemma}[\text{\cite[Lemma 2.1]{anari2018logconcaveI}}] \label{lem:affine-lc}
  If $f \in \^R[x_1, \cdots, x_n]$ is log-concave and $T: R^m \to R^n$ is an affine transformation such that $T(R^m_{>0}) \subseteq R^n_{>0}$, then $f(T(y_1, \cdots, y_m)) \in \^R[y_1, \cdots y_m]$ is log-concave.
\end{lemma}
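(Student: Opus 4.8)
The plan is to reduce \Cref{lem:affine-lc} to the elementary fact that pre-composing a concave function with an affine map again yields a concave function, using the hypothesis $T(\mathbb{R}^m_{>0}) \subseteq \mathbb{R}^n_{>0}$ only to guarantee that the composition stays inside the region $\mathbb{R}^n_{>0}$ on which $\log f$ is assumed concave.

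First I would set $g(\*{y}) := f(T(\*{y}))$, which is a polynomial in $y_1,\ldots,y_m$ since it is the composition of $f$ with the affine coordinate functions of $T$. Assuming $f \not\equiv 0$ (otherwise the statement is vacuous), $f$ has non-negative coefficients and is not identically zero, hence $f > 0$ on $\mathbb{R}^n_{>0}$; together with $T(\mathbb{R}^m_{>0}) \subseteq \mathbb{R}^n_{>0}$ this gives $g > 0$ on $\mathbb{R}^m_{>0}$, so $\log g = \log f \circ T$ is a well-defined real-valued function there and it makes sense to ask whether it is concave.

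Next, fix $\*{y}, \*{y}' \in \mathbb{R}^m_{>0}$ and $\lambda \in (0,1)$. Since the positive orthant is convex, $\lambda \*{y} + (1-\lambda)\*{y}' \in \mathbb{R}^m_{>0}$; and since $T$ is affine, $T(\lambda \*{y} + (1-\lambda)\*{y}') = \lambda\, T(\*{y}) + (1-\lambda)\, T(\*{y}')$. By hypothesis $T(\*{y})$ and $T(\*{y}')$ both lie in $\mathbb{R}^n_{>0}$, so I may apply the log-concavity of $f$ at the points $x := T(\*{y})$ and $y := T(\*{y}')$ (their convex combination also lies in $\mathbb{R}^n_{>0}$) to obtain $f\bigl(\lambda T(\*{y}) + (1-\lambda)T(\*{y}')\bigr) \geq f(T(\*{y}))^{\lambda}\, f(T(\*{y}'))^{1-\lambda}$. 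Substituting the affinity identity into the left-hand side yields $g(\lambda \*{y} + (1-\lambda)\*{y}') \geq g(\*{y})^{\lambda}\, g(\*{y}')^{1-\lambda}$, which is precisely the definition of log-concavity for $g$.

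I do not expect a genuine obstacle here; the one point requiring care is the domain condition. The defining inequality for log-concavity of $f$ is only assumed to hold on $\mathbb{R}^n_{>0}$, so one must use the strict inclusion $T(\mathbb{R}^m_{>0}) \subseteq \mathbb{R}^n_{>0}$ — rather than merely $T(\mathbb{R}^m_{>0}) \subseteq \mathbb{R}^n_{\geq 0}$ — to be sure that every point at which $f$ or $\log f$ is evaluated avoids the boundary, where $\log f$ can fail to be finite or concave. I would also note in passing that $g$ need not have non-negative coefficients, but this is irrelevant: the conclusion asserts only that $\log g$ is concave on the open orthant, and the argument above establishes exactly that.
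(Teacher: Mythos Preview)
Your argument is correct and is the standard elementary proof: affinity of $T$ lets you pull the convex combination through, and the hypothesis $T(\mathbb{R}^m_{>0})\subseteq\mathbb{R}^n_{>0}$ ensures all evaluations stay in the region where $\log f$ is concave. The paper does not give its own proof of this lemma; it simply cites it as a well-known fact from \cite{anari2018logconcaveI}, so there is nothing to compare against.

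One small inaccuracy in your final paragraph: in fact $g$ \emph{does} have non-negative coefficients. The condition $T(\mathbb{R}^m_{>0})\subseteq\mathbb{R}^n_{>0}$ forces $A\ge 0$ entrywise (send one coordinate of $\*y$ to $+\infty$) and $\*b\ge 0$ (send $\*y\to 0^+$), so each coordinate of $T(\*y)$ is a polynomial in $\*y$ with non-negative coefficients, and composing with $f$ preserves this. As you say, this is irrelevant to the argument, but the remark as written is not quite right.
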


As observed in~\cite{anari2021logconcave}, for a multiaffine polynomial $f$, its partial derivative is given by 
\begin{align*}
  \partial_1 f = \lim_{c\to\infty} \frac{f(c, x_2, \cdots, x_n)}{c},
\end{align*}
which means the derivatives of a multiaffine log-concave polynomial are limits of log-concave polynomials, which are also log-concave by definition.
It implies that a multiaffine log-concave polynomial is automatically strongly log-concave.
\begin{fact}[\cite{anari2021logconcave}]
  If polynomial $f$ is multiaffine and log-concave, then $f$ is strongly log-concave.
\end{fact}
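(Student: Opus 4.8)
The plan is to prove the statement by induction on the number of partial derivatives taken, reducing everything to the following closure property: \emph{if $g \in \^R[x_1,\dots,x_n]$ is multiaffine and log-concave, then for each $i$ the polynomial $\partial_i g$ is again multiaffine and either vanishes identically or is log-concave.} Granting this, the Fact follows: set $g_0 := f$ and $g_j := \partial_{i_j} g_{j-1}$ along an arbitrary sequence of derivatives; by induction each $g_j$ is multiaffine and either vanishes or is log-concave, and once some $g_j$ vanishes all later ones do too. Hence $f$ meets the definition of strong log-concavity.

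To establish the closure property, the multiaffine part is immediate: writing $g = x_i g_1 + g_0$ with $g_0,g_1$ not involving $x_i$ (possible since $g$ has degree at most $1$ in $x_i$), we have $\partial_i g = g_1$, which is still multiaffine. For log-concavity I would invoke the identity noted just before the statement, namely $\partial_i g = \lim_{c\to\infty} g(\dots,c,\dots)/c$ with the constant $c$ placed in the $i$-th coordinate; for multiaffine $g$ this is an exact equality since $g(\dots,c,\dots)/c = g_1 + g_0/c$. Now for each fixed $c > 0$, substituting $c$ into the $i$-th slot is an affine transformation of $\^R^{n-1}$ into $\^R^n$ that maps the positive orthant into the positive orthant, so by \Cref{lem:affine-lc} the polynomial $g(\dots,c,\dots)$ is log-concave in the remaining variables; dividing by the positive constant $c$ preserves this. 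Finally, the defining inequality $h(\lambda x + (1-\lambda)y) \ge h(x)^\lambda h(y)^{1-\lambda}$ is a closed condition, so it survives the pointwise limit as $c\to\infty$ of the functions $h_c := g(\dots,c,\dots)/c$ over $\^R^{n-1}_{>0}$; therefore $\partial_i g$ is log-concave (or is identically zero, in which case there is nothing to check). Viewing $\partial_i g$ as a polynomial in all $n$ variables that happens not to depend on $x_i$ does not affect log-concavity over the positive orthant, so the induction goes through cleanly.

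The only point requiring a little care — and what I regard as the crux — is the limit step: one must check that $g(\dots,c,\dots)/c$ really converges pointwise to $\partial_i g$ on the positive orthant (this is exactly where multiaffineness is used, since for higher degree the quotient would diverge) and that log-concavity is preserved under this pointwise limit (which holds because the inequality defining it is closed). Everything else — the induction, the preservation of multiaffineness, and the reduction to \Cref{lem:affine-lc} — is routine bookkeeping.
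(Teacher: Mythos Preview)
Your proposal is correct and follows essentially the same argument as the paper: both use the identity $\partial_i g = \lim_{c\to\infty} g(\dots,c,\dots)/c$ for multiaffine $g$, invoke \Cref{lem:affine-lc} to see that each $g(\dots,c,\dots)/c$ is log-concave, and then pass to the limit. You have simply made explicit the induction, the preservation of multiaffineness under $\partial_i$, and the closure of log-concavity under pointwise limits, all of which the paper leaves implicit in its one-sentence sketch.
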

We note that most polynomials we consider are multiaffine, which means that log-concavity and strong log-concavity are equivalent within the scope of this work.


\subsection{Polynomial and distribution}

Let $\mu$ be a distribution over $2^{[n]}$. The generating function of $\mu$ is given by $g_{\mu} = \sum_{S \subseteq [n]} \mu(S) \prod_{i \in S} x_i$.

It is known that multiaffine polynomials are closely related to generating polynomials of distribution. 
Let $f$ be a multiaffine polynomial $f \in \mathbb{R}[x_1,x_2,\ldots,x_n]$ with non-negative coefficients. 
If $f \neq 0$, there exists a distribution $\mu$ over $2^{[n]}$ such that its generating polynomial is identical to $f$ up to a scaling factor.
Hence, We may also say $f$ is the generating polynomial of $\mu$.

Furthermore, if the generating function $g_{\mu}$ of $\mu$ is $r$-homogeneous and log-concave, then the support of $\mu$ must be the set of bases of a matroid~\cite{branden2020lorentzian}.


\subsection{Down-up walk}

Let $\mu$ be a distribution over $\binom{[n]}{r}$. 
Let $\Omega(\mu) := \set{S \subseteq [n] \mid \mu(S) > 0}$ be the support of $\mu$.
A classical method for sampling from this homogeneous distribution is the down-up walk, described below.
\begin{definition} \label{def:down-up-walk}
  For a distribution $\mu$ over $\binom{[n]}{r}$, the \emph{down-up walk} $P$ updates a configuration $S \in \binom{[n]}{r}$ according to the following rule:
  \begin{enumerate}
  \item select a subset $T\subseteq S$ of size $r-1$ uniformly at random;
  \item update $S$ to $S'$ by selecting $S' \supseteq T$ with probability proportional to $\mu(S')$.
  \end{enumerate}
\end{definition}
When the support of $\mu$ is the set of bases of a matroid, 
this walk is also known as the \emph{bases-exchange walk}.

If the down-up walk $P$ connects $\Omega(\mu)$,
then $\mu$ is its unique stationary distribution.
Its \emph{mixing time} is defined by 
\begin{align*}
  t_{\-{mix}}(\epsilon) := \min \set{t \left\vert \max_{S \in \Omega(\mu)} \DTV{P^t(S, \cdot)}{\mu} \leq \epsilon \right. }.
\end{align*}
The down-up walk mixes rapidly if $g_\mu$ is (strongly) log-concave~\cite{CGM21, anari2021logconcave}.

\begin{proposition}[\text{\cite[Theorem 1]{anari2021logconcave}}]\label{prop:slc-mixing-time}
  If $g_\mu$ is $r$-homogeneous and log-concave, the mixing time of the down-up walk can be bounded by $t_{\-{mix}}(\epsilon) = O(r \log (r/\epsilon))$.
\end{proposition}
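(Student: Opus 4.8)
The plan is to analyze the down-up walk through the theory of high-dimensional expanders and the local-to-global paradigm. First I would encode $\mu$ as the top-level distribution of a pure $(r-1)$-dimensional weighted simplicial complex $X$ whose faces are the subsets of members of $\Omega(\mu)$, with the weight of a face $\tau$ equal to $\sum_{S\supseteq\tau}\mu(S)$; recall from the discussion above that log-concavity and homogeneity of $g_\mu$ already force $\Omega(\mu)$ to be the set of bases of a matroid, so $X$ is this matroid's independence complex weighted by $\mu$. For a face $\tau$ with $|\tau|=k$, the \emph{link} $X_\tau$ carries the conditional distribution $\mu_\tau(\cdot)\propto\mu(\tau\cup\cdot)$ on $(r-k)$-subsets of $[n]\setminus\tau$, and (since $g_\mu$ is multiaffine) its generating polynomial is exactly the $k$-fold partial derivative $\partial_\tau g_\mu$ up to normalization. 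By the fact that multiaffine log-concave polynomials are strongly log-concave, together with \Cref{lem:affine-lc}, every such derivative is again log-concave and homogeneous, so every link is again a complex of the same type, one dimension lower. The crucial base estimate concerns the co-dimension-$2$ links: there $g$ is a $2$-homogeneous multiaffine log-concave polynomial $\sum_{i<j}c_{ij}x_ix_j$, and a short computation on $\nabla^2\log g$ at $\mathbf 1$ (writing the log-concavity inequality as $g\cdot\nabla^2 g\preceq\nabla g(\nabla g)^{\!\top}$, then conjugating by the weighted-degree diagonal and using Courant--Fischer against the rank-one right-hand side) shows that the non-lazy simple random walk on the $1$-skeleton of that link has second-largest eigenvalue $\le 0$. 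A trickling-down (garland) argument then propagates this to the $1$-skeleton of every link: $X$ is a one-sided $0$-local-spectral-expander.

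Given the $\le 0$ local expansion, the spectral route is to invoke a local-to-global theorem (in the style of Kaufman--Oppenheim and Alev--Lau): when all local second eigenvalues $\gamma_0,\dots,\gamma_{r-2}$ are $\le 0$, the top-level down-up walk $P$ has spectral gap at least $\frac1r\prod_{k=0}^{r-2}(1-\gamma_k)\ge\frac1r$. Combined with the standard gap-to-mixing conversion this yields $t_{\mathrm{mix}}(\epsilon)=O\big(r(\log\tfrac1{\mu_{\min}}+\log\tfrac1\epsilon)\big)$, which is polynomial but \emph{not} the claimed bound, since $\log\tfrac1{\mu_{\min}}$ can be as large as $\Theta(r\log n)$.

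To reach the optimal $O(r\log(r/\epsilon))$ one must replace variance contraction by \emph{entropy} contraction. The key step is to upgrade the $\lambda_2\le 0$ statement at each link into an entropic-independence statement, equivalently a one-step relative-entropy contraction with factor $1-\frac1{\rk X_\tau}$ for the down-up walk on the link; this is the genuinely new analytic input, and it is here that log-concavity is used in an essential way beyond the purely spectral estimate (it is the technical heart of \cite{anari2021logconcave}, building on the refined link-level analysis of \cite{CGM21}). A local-to-global induction for entropy then telescopes these link-level contractions into a modified log-Sobolev inequality for the top-level down-up walk with constant $\rho=\Omega(1/r)$. Converting the MLSI to mixing time gives $t_{\mathrm{mix}}(\epsilon)=O\big(\rho^{-1}(\log\log\tfrac1{\mu_{\min}}+\log\tfrac1\epsilon)\big)$, and controlling the initial relative entropy by a warm start (so that only a $\log(r/\epsilon)$ rather than a $\log\log\tfrac1{\mu_{\min}}$ term survives) yields the asserted bound. \textbf{Main obstacle.} The hard part is exactly the link-level entropic-independence estimate: the naive spectral bound only gives the non-tight version above, and one needs the sharper entropy-based comparison together with careful bookkeeping of the telescoping so that no spurious $\log\tfrac1{\mu_{\min}}$ factor is reintroduced.
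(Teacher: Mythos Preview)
The paper does not supply its own proof of this proposition: it is quoted as \cite[Theorem~1]{anari2021logconcave} and used entirely as a black box. There is therefore nothing in the present paper to compare your attempt against.

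That said, your sketch is a faithful high-level reconstruction of the argument in the cited references. The chain of ideas---encoding $\mu$ as the top distribution of a weighted simplicial complex, deriving $0$-local-spectral-expansion of every link from log-concavity of the $2$-homogeneous localisations, observing that the spectral gap alone only yields an $O\big(r\log(1/\mu_{\min})\big)$ bound, and then upgrading to an entropy-contraction / modified log-Sobolev statement with constant $\Omega(1/r)$ via a local-to-global induction---is exactly the route taken in \cite{CGM21,anari2021logconcave}. One minor imprecision: the removal of the residual dependence on $\mu_{\min}$ in \cite{anari2021logconcave} is not achieved by a separate ``warm start'' step. Rather, they establish \emph{entropic independence} of the down operator directly (relative entropy contracts by a factor $1-1/r$ under $D_{r\to r-1}$), and the resulting telescoping/tensorisation controls the initial-entropy term intrinsically. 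This is a matter of phrasing, not a genuine gap.
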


\section{Our algorithm}
\label{sec:algo}

In this section, we prove \Cref{thm:main-theorem}.
Our main tool is the down-up walk in \Cref{def:down-up-walk}.
As the uniform distribution over independent sets is not homogeneous,
in \Cref{sec:mixing} we first consider a standard homogenization,
namely its polarized version.
Then standard results imply that the down-up walk for the polarized homogeneous distribution mixes in time $O(n\log n)$.
In \Cref{sec:implement}, we show how to implement the down-up walk with $\widetilde{O}(1)$ cost.
With these ingredients, the proof of \Cref{thm:main-theorem} is given at the end of this section.

\subsection{Down-up walk for polarized polynomial}
\label{sec:mixing}
%
Let $\+M = ([n], \+I)$ be a rank-$r$ matroid and $\*\lambda \in \^R^n_{>0}$ be the external fields.
Consider
\begin{align*} 
  g(x_1, \cdots, x_n) := \sum_{S\in \+I} \prod_{i \in S} x_i.
\end{align*}
It is straightforward to verify that $g(\lambda_1 x_1, \cdots, \lambda_n x_n)$ is the generating polynomial of $\muM$.

Note that $g$ is not homogeneous, which means that we may not directly employ the down-up walk to sample from the distribution $\mu_{\+M,\*\lambda}$.
However, there is a homogeneous variant of $g$,
\begin{align} \label{eq:gh}
  g_h(y, x_1, \cdots, x_n) := \sum_{S \in \+I} y^{n - \abs{S}} \prod_{i \in S} x_i.
\end{align}
As a key step in the proof of Mason's ultra-log-concavity conjecture for independent sets of matroid~\cite{anari2018logconcave, branden2020lorentzian}, the following result is proved.
\begin{lemma}[\text{\cite[Theorem 4.1]{anari2018logconcave}}]\label{lem:slc-gh}
  The polynomial $g_h$ in \eqref{eq:gh} is strongly log-concave. 
\end{lemma}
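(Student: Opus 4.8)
The plan is to derive the strong log-concavity of $g_h$ from the known strong log-concavity of the basis generating polynomial of a suitable matroid, via an affine substitution, using \Cref{lem:affine-lc} and the fact that multiaffine log-concave polynomials are automatically strongly log-concave. First I would recall that $g_h$ is $n$-homogeneous and multiaffine in $(y, x_1, \dots, x_n)$: each monomial has total degree $n$, and $y$ appears with degree $n - |S| \le n$ while each $x_i$ appears with degree at most $1$. Wait—$y$ can appear with degree up to $n$, so $g_h$ is not multiaffine in $y$. This is the first subtlety to address: I would instead introduce $n$ fresh ``dummy'' coordinates $y_1, \dots, y_n$ and consider the polynomial $\widehat{g}(y_1, \dots, y_n, x_1, \dots, x_n) := \sum_{S \in \+I} e_{n - |S|}(y_1, \dots, y_n) \prod_{i \in S} x_i$, where $e_k$ denotes the elementary symmetric polynomial. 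This $\widehat{g}$ is $n$-homogeneous and genuinely multiaffine in all $2n$ variables, and it specializes to $g_h$ under $y_i \mapsto y$ for all $i$, which is an affine (in fact linear) transformation mapping the positive orthant into itself; so by \Cref{lem:affine-lc} it suffices to show $\widehat{g}$ is log-concave.

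The key combinatorial step is to recognize $\widehat{g}$ as the basis generating polynomial of a matroid. I would take the matroid $\+M_{\+I}$ alluded to in the introduction: the direct sum (or an appropriate truncation-free construction) of $\+M$ with the free matroid / a uniform matroid on the $n$ dummy elements, set up so that a basis of $\+M_{\+I}$ consists of an independent set $S \in \+I$ of $\+M$ together with exactly $n - |S|$ of the dummy elements. Concretely, one natural choice is to add $n$ parallel-class-free coloops or to take $\+M \oplus U_{n,n}$ and then truncate to rank $n$; I would verify that the rank-$n$ truncation of $\+M \oplus U_{n,n}$ has exactly the bases $\{S \cup Y : S \in \+I,\ Y \subseteq \{y_1,\dots,y_n\},\ |S| + |Y| = n\}$, whose generating polynomial is precisely $\widehat{g}$. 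Since the basis generating polynomial of any matroid is strongly log-concave (this is the Anari–Oveis Gharan–Vinzant / Brändén–Huh theorem underlying the results cited in the excerpt), $\widehat{g}$ is log-concave, hence so is $g_h$.

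The main obstacle I anticipate is the matroid-construction step: truncation of a matroid is itself a nontrivial operation (it is not a direct sum or a minor in general), and I must make sure the construction yielding $\widehat{g}$ as a basis generating polynomial is correct and that truncation preserves the matroid axioms and the log-concavity of the basis polynomial. An alternative route that sidesteps truncation is to invoke directly that for a matroid $\+M$ the polynomial $\sum_{S \in \+I} \prod_{i \in S} x_i$ is strongly log-concave (independence polynomials of matroids are strongly log-concave, as in the Mason conjecture work), and then obtain $g_h$ from it by the standard homogenization $x_i \mapsto x_i / y$ followed by clearing denominators—but one must be careful that this rational substitution interacts correctly with log-concavity, so I would likely present it as: log-concavity of $f(x) = g(x_1,\dots,x_n)$ on $\^R^n_{>0}$ transfers to log-concavity of its homogenization $g_h(y, x) = y^n f(x_1/y, \dots, x_n/y)$ on $\^R^{n+1}_{>0}$, a general and easily checked fact about homogenizations of log-concave functions, and finally upgrade log-concavity to strong log-concavity using the multiaffine-in-$x$ structure together with the observation that differentiating in $y$ only lowers the $y$-degree. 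I would lead with whichever of these two arguments the earlier parts of the paper have set up most cleanly; given that \Cref{lem:affine-lc} is already quoted, the affine-transfer argument via $\widehat{g}$ is the most self-contained.
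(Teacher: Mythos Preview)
First, note that the paper does not prove this lemma at all; it simply cites \cite[Theorem~4.1]{anari2018logconcave} and moves on. So there is no in-paper argument to compare against, and your proposal is an attempt to supply a proof where the paper gives none.

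Your main approach has a concrete error. You claim that
\[
\widehat{g}(x_1,\dots,x_n,y_1,\dots,y_n)=\sum_{S\in\+I} e_{n-|S|}(y_1,\dots,y_n)\prod_{i\in S}x_i
\]
specializes to $g_h$ under $y_i\mapsto y$. It does not: since $e_k(y,\dots,y)=\binom{n}{k}y^k$, the specialization is $\sum_{S\in\+I}\binom{n}{|S|}\,y^{n-|S|}\prod_{i\in S}x_i$, which differs from $g_h$ by the binomial factors. This is exactly why the paper's polarization $g_p$ in \eqref{eq:gp} carries the $1/\binom{n}{|S|}$ normalisers---those are precisely what make $g_p|_{y_i=y}=g_h$. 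But then $g_p$ is \emph{not} the (unweighted) basis generating polynomial of your truncation of $\+M\oplus U_{n,n}$, so the blanket ``basis polynomials of matroids are strongly log-concave'' no longer applies to it. In short, your $\widehat{g}$ is a genuine matroid basis polynomial but specializes to the wrong thing, while the polynomial that specializes correctly is $g_p$, which is not a matroid basis polynomial. Note also that the paper's logic runs in the opposite direction: it uses \Cref{lem:slc-gh} (as a cited fact) to deduce \Cref{lem:slc-gp} about $g_p$, not the other way around.

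Your fallback route is circular: the strong log-concavity of the non-homogeneous independence polynomial $g(x)=\sum_{S\in\+I}\prod_{i\in S}x_i$ is exactly what \cite{anari2018logconcave} establishes \emph{by} first proving \Cref{lem:slc-gh} for the homogenization $g_h$ and then setting $y=1$. Invoking log-concavity of $g$ to deduce that of $g_h$ therefore assumes the very statement you are trying to prove.
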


However, $g_h$ is not multiaffine, which means that it is not a generating polynomial of any distribution.
Instead, we consider the polarized version of $g_h$.
\begin{lemma}
  \label{lem:slc-gp}
  If $g_h$ in \eqref{eq:gh} is strongly log-concave, then the following polynomial is also strongly log-concave:
  \begin{align}\label{eq:gp}
    g_{p}(x_1,\cdots,x_n,y_1,\cdots,y_n) = \sum_{S \in \+I} \frac{e_{n-\abs{S}}(\*y)}{\binom{n}{\abs{S}}}  \prod_{i \in S} x_i,
  \end{align}
  where $e_k(\*y) = \sum_{1 \le i_1 < i_2< \ldots<i_k \le n} \prod_{j=1}^k y_{i_j}$ is the $k$-th elementary symmetric polynomial.
\end{lemma}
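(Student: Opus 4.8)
The plan is to exhibit $g_p$ as the image of $g_h$ (or a trivial rescaling of it) under an affine change of variables, so that \Cref{lem:affine-lc} applies directly. The key observation is that polarization replaces the monomial $y^{n-\abs{S}}$ by $e_{n-\abs{S}}(y_1,\dots,y_n)/\binom{n}{n-\abs{S}}$, and that this substitution is realised by the linear map $T$ sending $y \mapsto \frac{1}{n}(y_1 + \cdots + y_n)$ while keeping each $x_i$ fixed. Indeed, under $y \mapsto \frac1n\sum_j y_j$ the power $y^k$ becomes $\frac{1}{n^k}(\sum_j y_j)^k$, whose multilinear part — after collecting monomials with distinct indices — is proportional to $e_k(\*y)$; more precisely $\big(\sum_j y_j\big)^k$ has its squarefree part equal to $k!\, e_k(\*y)$, and one checks $\frac{k!}{n^k}$ versus the target coefficient $\frac{1}{\binom{n}{k}} = \frac{k!(n-k)!}{n!}$ differ only by a factor independent of the structure, which can be absorbed. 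So the honest route is: first replace $y$ in $g_h$ by $\frac1n(y_1+\cdots+y_n)$ to obtain a (no longer multiaffine) log-concave polynomial $\tilde g$ in $x_1,\dots,x_n,y_1,\dots,y_n$, and then note that $g_p$ is, up to variable-independent constants, the multilinear truncation of $\tilde g$.

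The step that needs care — and which I expect to be the main obstacle to write cleanly — is passing from $\tilde g$ to its multilinear part while preserving log-concavity, since truncation to squarefree monomials is not in general an affine operation. The standard fix (this is exactly the polarization trick of Brändén–Huh, see \cite[Section 3]{branden2020lorentzian}) is iterated: introduce fresh variables so that the substitution $y \mapsto \frac{1}{n}(y_1+\cdots+y_n)$ is performed one variable at a time, each time peeling off one power of $y$ via a partial-derivative/limit argument. Concretely, writing $g_h$ as a polynomial in $y$ of degree $n$, one uses the identity that the polarization of $y^k$ in $n$ variables equals $\binom{n}{k}^{-1}$ times the image under the symmetrization operator; and the symmetrization operator is a composition of affine substitutions and the operations $f \mapsto \partial_i f$ and $f \mapsto$ (rescaling), all of which preserve strong log-concavity by \Cref{lem:affine-lc} together with the fact that a multiaffine log-concave polynomial is strongly log-concave and hence closed under taking derivatives. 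Since $g_h$ is $n$-homogeneous in $(y,x_1,\dots,x_n)$, the resulting $g_p$ is $n$-homogeneous as well, and each intermediate polynomial is homogeneous, so there is no degeneration issue.

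Putting it together: I would (i) state the polarization operator $\Pi_n$ that acts on the $y$-grading, recall from \cite{branden2020lorentzian} that $\Pi_n$ preserves strong log-concavity (being a composition of affine maps, partial derivatives, and positive scalings), (ii) compute $\Pi_n[g_h]$ monomial-by-monomial on a typical term $y^{n-\abs{S}}\prod_{i\in S}x_i$ to confirm it equals $\frac{e_{n-\abs S}(\*y)}{\binom{n}{\abs S}}\prod_{i\in S}x_i$ (using $\binom{n}{n-\abs S}=\binom{n}{\abs S}$), and conclude $g_p = \Pi_n[g_h]$ is strongly log-concave. The only genuine content beyond bookkeeping is invoking that $\Pi_n$ preserves the property, which is \cite[Proposition 3.1 / Lemma 3.2]{branden2020lorentzian}; everything else is the elementary symmetric-function identity $\mathrm{sqfree}\big((\textstyle\sum_j y_j)^k\big) = k!\, e_k(\*y)$ and matching constants. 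If one prefers to stay self-contained, the one-variable-at-a-time induction sketched above gives a direct proof using only \Cref{lem:affine-lc} and the derivative-closure of multiaffine log-concave polynomials, at the cost of a slightly longer but entirely routine computation.
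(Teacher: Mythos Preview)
Your proposal is correct and takes the same route as the paper, which does not give an in-line proof but simply remarks that the lemma is a special case of \cite[Proposition 3.1]{branden2020lorentzian} (with a further pointer to \cite[Section 6.6]{Mousa22}) --- exactly the polarization-operator argument you settle on in steps (i)--(ii).

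One small correction to your opening heuristic, so you do not carry it into the write-up: after the affine substitution $y\mapsto\frac{1}{n}\sum_j y_j$, the squarefree part of $y^k$ carries the coefficient $\tfrac{k!}{n^k}$, whereas the target coefficient is $\tfrac{1}{\binom{n}{k}}=\tfrac{k!(n-k)!}{n!}$; their ratio $\tfrac{n^k(n-k)!}{n!}$ genuinely depends on $k$ (it is $1$ for $k\le 1$ but $\tfrac{n}{n-1}$ for $k=2$, etc.), so it cannot be absorbed by a single global rescaling across terms with different $\abs{S}$. You were right to abandon that shortcut and go through $\Pi_n$ properly.
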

We remark that \Cref{lem:slc-gp} is a special case of \cite[Proposition 3.1]{branden2020lorentzian},
and it is more explicitly derived in \cite[Section 6.6]{Mousa22}.

The polarized polynomial in \eqref{eq:gp} corresponds back to a distribution.
Let $X := \{x_1, \cdots, x_n\}$ denote elements in $\+M$ and $Y := \{y_1, \cdots, y_n\}$ denote the auxiliary variables introduced by polarization.
Let $\pi$ be the distribution over subsets of $X \cup Y$ corresponding to the generating polynomial $g_p(\lambda_1 x_1,\ldots,\lambda_n x_n, y_1,\ldots,y_n)$.
Then the support of $\pi$ is given by
\begin{align*}
  \Omega(\pi) = \{A \cup B \mid A \in \+I, B \subseteq Y, \abs{A}+\abs{B} = n\}.
\end{align*}
Furthermore, for every $S = A \cup B \in \Omega(\pi)$, it holds that 
\begin{align*}
  \pi(S) &\propto \frac{1}{\binom{n}{\abs{A}}} \prod_{x_i \in A} \lambda_i,
\end{align*}
and $\sum_{S: S\cap X=A}\pi(S) = \muM(A)$.

Therefore, to sample from $\muM$ within a TV distance of $\epsilon$, it suffices to sample $S \in \Omega(\pi)$ such that $\DTV{S}{\pi} \leq \epsilon$, and then return $S \cap X$ as the result. 


Note that $g_p$ is homogeneous and multiaffine.
Moreover, according to \Cref{lem:slc-gh} and \Cref{lem:slc-gp}, $g_p$ is strongly log-concave.
By \Cref{lem:affine-lc}, the polynomial $g_p(\lambda_1 x_1, \cdots, \lambda_n x_n, y_1, \cdots, y_n)$ is also log-concave.
Hence, by \Cref{prop:slc-mixing-time}, we have the following result, which gives a powerful framework to build fast sampling algorithm for independent set of matroid.
\begin{lemma} \label{lem:mix-gp}
  The down-up walk $P$ of $\pi$ mixes in time $O(n\log(n/\epsilon))$.
\end{lemma}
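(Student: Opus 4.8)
The plan is to verify that the generating polynomial of $\pi$ meets the hypotheses of \Cref{prop:slc-mixing-time}, namely that it is $r$-homogeneous (for the appropriate $r$) and log-concave, and then read off the mixing time directly.

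First I would pin down the homogeneity. The polarized polynomial $g_p(x_1,\dots,x_n,y_1,\dots,y_n)$ in \eqref{eq:gp} is a sum over $S \in \+I$ of monomials of the form $e_{n-\abs{S}}(\*y)\prod_{i\in S}x_i$; since $e_{n-\abs{S}}(\*y)$ is $(n-\abs{S})$-homogeneous in $\*y$ and $\prod_{i\in S}x_i$ is $\abs{S}$-homogeneous in $\*x$, every such monomial has total degree exactly $n$. Hence $g_p$ is $n$-homogeneous, and this property is preserved by the diagonal scaling $x_i \mapsto \lambda_i x_i$, so $g_\pi(\*x,\*y) = g_p(\lambda_1 x_1,\dots,\lambda_n x_n, y_1,\dots,y_n)$ is $n$-homogeneous. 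Equivalently, as already recorded above, $\Omega(\pi) = \{A\cup B \mid A\in\+I,\ B\subseteq Y,\ \abs{A}+\abs{B}=n\}$, so $\pi$ is supported on $\binom{X\cup Y}{n}$ and the down-up walk of \Cref{def:down-up-walk} applies with $r=n$.

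Next I would establish log-concavity. By \Cref{lem:slc-gh}, $g_h$ in \eqref{eq:gh} is strongly log-concave; applying \Cref{lem:slc-gp} then gives that the polarization $g_p$ in \eqref{eq:gp} is strongly log-concave, hence in particular log-concave over the positive orthant. The substitution $(\*x,\*y)\mapsto(\lambda_1 x_1,\dots,\lambda_n x_n,y_1,\dots,y_n)$ is an affine (indeed linear, diagonal) transformation mapping $\^R^{2n}_{>0}$ into $\^R^{2n}_{>0}$ since each $\lambda_i>0$, so \Cref{lem:affine-lc} yields that $g_\pi = g_p(\lambda_1 x_1,\dots,\lambda_n x_n,\*y)$ is log-concave. (All of this is exactly the chain of implications spelled out in the paragraph preceding the lemma.)

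Finally, $g_\pi$ is the generating polynomial of $\pi$, it is $n$-homogeneous, and it is log-concave, so \Cref{prop:slc-mixing-time} with $r=n$ gives $t_{\-{mix}}(\epsilon) = O(n\log(n/\epsilon))$ for the down-up walk $P$ of $\pi$, which is the claim. I do not anticipate a genuine obstacle here: the lemma is essentially a bookkeeping step assembling \Cref{lem:slc-gh}, \Cref{lem:slc-gp}, \Cref{lem:affine-lc}, and \Cref{prop:slc-mixing-time}. The only point needing a moment's care is confirming that $g_p$ really is $n$-homogeneous (so that the correct value of $r$ to feed into \Cref{prop:slc-mixing-time} is $n$, not $r = \rk(\+M)$), which is immediate from the degree count above.
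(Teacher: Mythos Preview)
Your proposal is correct and follows exactly the same approach as the paper: the paper simply observes (in the paragraph immediately preceding the lemma) that $g_p$ is homogeneous and multiaffine, invokes \Cref{lem:slc-gh} and \Cref{lem:slc-gp} to get strong log-concavity of $g_p$, applies \Cref{lem:affine-lc} for the scaled version, and then reads off the mixing time from \Cref{prop:slc-mixing-time} with $r=n$. Your write-up is in fact slightly more explicit about the degree count confirming $n$-homogeneity, but there is no substantive difference.
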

%
We note that the mixing time $O(n\log(n/\epsilon))$ does not readily imply a sampler with $O(n \log (n/\epsilon))$ running time, 
as it may take $\omega(1)$ time to implement a single transition step of $P$.
According to \Cref{def:down-up-walk}, in each step, $P$ updates a state $S \in \Omega(\pi)$ as in \Cref{alg:down-up-walk-pi}.
\begin{algorithm}
  \caption{a step of down-up walk $P$ on $\pi$}\label{alg:down-up-walk-pi}
  select a subset $T \subseteq S$ of size $n-1$ uniformly at random; \\
  update $S$ to $S'$ by selecting random $S' \supseteq T$ according to the following law:
    \begin{align} \label{eq:target-prob}
      \Pr{S'} \propto
      \*1[S' \cap X \in \+I] \times
      \begin{cases}
        \frac{1}{\binom{n}{\abs{T\cap X}}}, & S' \setminus T \in Y; \\
        \frac{\lambda_i}{\binom{n}{\abs{T\cap X} + 1}}, & S'\setminus T \in X.
      \end{cases}
    \end{align}
\end{algorithm}

The main obstacle is to implement the second step in \Cref{alg:down-up-walk-pi} efficiently.
A na{\"i}ve approach checks whether $(\{x_i\} \cup T)\cap X \in \+I$ for each $x_i \in X \setminus T$  by calling the independence oracle $\+O_I$, 
and then generates a random sample from all ``feasible'' $x_i$ and together with all of $y_i \in Y \setminus T$ according to the desired distribution in \eqref{eq:target-prob}.
In the worst case, this gives an $O(n)$ overhead and the running time of the sampling algorithm becomes $O(n^2 \log n)$.

\subsection{A fast implementation of the down-up walk}
\label{sec:implement}

Our main contribution is an efficient implementation of the down-up walk $P$ on $\pi$, where each step of $P$ takes constant time in expectation given an independence oracle $\+O_I$.
In fact, the implementation task is yet another sampling problem from the distribution in \eqref{eq:target-prob},
and we do so by rejection sampling, described in \Cref{alg:rej-sampling}.


\begin{algorithm}[H]
  \caption{implementation for the second step of the down-up walk $P$}\label{alg:rej-sampling}
  \SetKwInOut{Input}{input}
  \SetKwInOut{Output}{output}
  \SetKwInOut{Parameter}{parameter}
  \Input{a subset $T \subseteq X\cup Y$ of size $n-1$ such that $T\cap X \in \+I$}
  \Output{a random configuration $S$ according to the distribution defined in \eqref{eq:target-prob}}
  \While{true}{
    \label{line:samping}
    propose an element $e \in (X\cup Y) \setminus T$ according to the following distribution $\nu$:
    \begin{align} \label{eq:nu}
      \forall e \in (X \cup Y) \setminus T, \quad \nu(e) \propto 
      \begin{cases}
        \lambda_i, & e = x_i \in X \setminus T;\\
        \frac{n - \abs{T\cap X}}{1 + \abs{T\cap X}}, & e \in Y \setminus T.
      \end{cases}
    \end{align}\\
    \If{$(T \cup \{e\}) \cap X \in \+I$}{ \label{line:reject}
      \textbf{return} $S = T\cup \{e\}$;
    }
  }
\end{algorithm}

The correctness of \Cref{alg:rej-sampling} is straightforward.
\begin{fact} \label{fact:correct-rej-sampling}
  The state $S$ produced by \Cref{alg:rej-sampling} follows the distribution defined in \eqref{eq:target-prob}.
\end{fact}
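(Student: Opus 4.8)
The plan is to verify that the rejection sampling loop in \Cref{alg:rej-sampling} produces exactly the conditional-on-acceptance distribution, which coincides with \eqref{eq:target-prob}. The core observation is that a single iteration of the \texttt{while} loop proposes $e \in (X \cup Y) \setminus T$ from the distribution $\nu$ in \eqref{eq:nu} and accepts iff $(T \cup \{e\}) \cap X \in \+I$; hence the output $S = T \cup \{e\}$ is distributed as $\nu$ conditioned on the acceptance event $\+A := \set{e : (T \cup \{e\}) \cap X \in \+I}$. So I would compute $\nu(e \mid \+A)$ explicitly and match it termwise against \eqref{eq:target-prob}.

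First I would note that acceptance is automatic for every $e \in Y \setminus T$: since $(T \cup \set{e}) \cap X = T \cap X \in \+I$ by the input precondition, every auxiliary-variable proposal is accepted. For $e = x_i \in X \setminus T$, acceptance happens iff $(T \cap X) \cup \set{x_i} \in \+I$. Thus, writing $Z := \sum_{e' \in \+A} \nu(e')$ for the (unnormalised) acceptance mass, we have for every accepted $e$ that $\Pr{S = T \cup \set{e}} = \nu(e)/Z$. Plugging in \eqref{eq:nu}: for $e \in Y \setminus T$ this is proportional to $\frac{n - \abs{T \cap X}}{1 + \abs{T \cap X}}$, and for $e = x_i$ with $(T \cap X) \cup \set{x_i} \in \+I$ this is proportional to $\lambda_i$. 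Both are indeed zero exactly when $S' \cap X \notin \+I$, matching the indicator $\*1[S' \cap X \in \+I]$ in \eqref{eq:target-prob}.

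Next I would reconcile the two proportionality constants. In \eqref{eq:target-prob} the $Y$-case weight is $1/\binom{n}{\abs{T \cap X}}$ and the $X$-case weight is $\lambda_i/\binom{n}{\abs{T \cap X}+1}$; their ratio (for the $Y$-case over the $X$-case, ignoring $\lambda_i$) is $\binom{n}{\abs{T \cap X}+1}/\binom{n}{\abs{T \cap X}} = \frac{n - \abs{T \cap X}}{\abs{T \cap X}+1}$, which is precisely the ratio exhibited in \eqref{eq:nu}. Since two distributions on the same finite support that agree up to a common scalar must be equal, $\nu(\cdot \mid \+A)$ equals the distribution in \eqref{eq:target-prob}. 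Finally, because the loop iterations are i.i.d.\ and each accepts with positive probability (the $Y \setminus T$ elements are always accepted, and $\abs{Y \setminus T} \ge 1$ unless $T \subseteq X$, in which case $T \cap X = T$ has size $n-1 = r$ wait---here $r$ may be less than $n$, so in fact whenever $T \subseteq X$ we need $T \in \+I$ with $\abs{T} = n-1$, forcing $r \ge n-1$; in all cases at least the returned-by-precondition analysis shows the acceptance set is nonempty since the matroid exchange axiom guarantees some $x_i$ with $(T\cap X)\cup\set{x_i}\in\+I$ whenever $\abs{T \cap X} < r$), the loop terminates almost surely and its output is exactly a single draw from $\nu(\cdot \mid \+A)$, proving \Cref{fact:correct-rej-sampling}.

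There is no serious obstacle here; the only point requiring care is the bookkeeping of the binomial coefficients to see that the proposal weights in \eqref{eq:nu} are the correctly reweighted version of the target \eqref{eq:target-prob} so that the conditional distribution comes out right, together with the trivial but necessary remark that acceptance never kills the support we want (the indicator in \eqref{eq:target-prob} is faithfully reproduced). Establishing that the loop terminates in expected $O(1)$ iterations is a separate quantitative claim handled by \cref{lem:rej-ub}, and is not needed for correctness.
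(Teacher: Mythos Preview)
Your argument is correct and is exactly the standard rejection-sampling verification the paper has in mind; indeed the paper does not give a proof at all, simply declaring the fact ``straightforward.'' One small cleanup: your parenthetical worry about termination when $T\subseteq X$ is unnecessary, since $\abs{Y}=n$ and $\abs{T}=n-1$ force $\abs{Y\setminus T}\ge 1$, so an always-accepted proposal is available in every case.
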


In terms of efficiency, the \textbf{while} loop in \Cref{alg:rej-sampling} is anticipated to execute for a constant number of rounds in expectation.
This is because the rejection probability is upper bounded by a constant, as shown by the next lemma.
\begin{lemma} \label{lem:rej-ub}
  It holds that $\Pr[e \sim \nu]{(T \cup \{e\})\cap X \not\in \+I} \leq \frac{\lambda_{\max}}{1 + \lambda_{\max}}$, where $\lambda_{\max} = \max_{i \in [n]} \lambda_i$.
\end{lemma}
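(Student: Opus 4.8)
The plan is to bound the rejection probability directly by analyzing the proposal distribution $\nu$. Write $k := \abs{T \cap X}$, so that $T$ contains $k$ elements of $X$ (forming an independent set by hypothesis) and $n-1-k$ elements of $Y$; hence $\abs{Y \setminus T} = k+1$ and $\abs{X \setminus T} = n-k$. The key observation is that an element $e \in Y \setminus T$ is \emph{never} rejected, since $(T \cup \{e\}) \cap X = T \cap X \in \+I$. So rejection can only happen when we propose some $x_i \in X \setminus T$, and moreover only for those $x_i$ with $(T \cup \{x_i\}) \cap X \notin \+I$. Since $T \cap X$ is independent of size $k$, by the matroid exchange/augmentation axiom the number of $x_i \in X \setminus T$ that \emph{can} be added while staying independent is at least $r - k$ (where $r$ is the rank); but for the bound we actually only need a cruder count, namely comparing the total $\nu$-weight of the "bad" $x_i$'s against the $\nu$-weight of $Y \setminus T$.

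**Key steps.** First I would compute the normalizing constant of $\nu$: the total weight of $Y \setminus T$ is $(k+1) \cdot \frac{n-k}{1+k} = n-k$, while the total weight of $X \setminus T$ is $\sum_{x_i \in X \setminus T} \lambda_i$. Next, let $R \subseteq X \setminus T$ be the set of "rejecting" elements, i.e.\ those $x_i$ with $(T \cap X) \cup \{x_i\} \notin \+I$. The rejection probability is exactly $\frac{\sum_{x_i \in R} \lambda_i}{(n-k) + \sum_{x_i \in X \setminus T} \lambda_i}$. The crucial combinatorial input is that $\abs{R} \le n - k - 1$: indeed, $\abs{X \setminus T} = n-k$, and since $T \cap X$ is independent and nonempty cases are handled by the augmentation axiom — more carefully, if $T \cap X$ is not a basis there is at least one $x_i \in X \setminus T$ that extends it, and if $T \cap X$ is a basis then... hmm, in that case every $x_i$ would be rejecting. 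I would instead bound more carefully: $\abs{R} \le n-k$ trivially, so $\sum_{x_i \in R}\lambda_i \le (n-k)\lambda_{\max}$, giving rejection probability $\le \frac{(n-k)\lambda_{\max}}{(n-k) + (n-k)\lambda_{\min}\cdot(\text{something})}$ — this doesn't immediately close. Let me reconsider: the clean bound uses $\sum_{x_i \in X\setminus T}\lambda_i \ge \sum_{x_i \in R}\lambda_i$ and that $\abs{R} \le \abs{X\setminus T} - (\text{number of augmenting elements})$; but the truly robust estimate is $\sum_{x_i \in R}\lambda_i \le \abs{R}\lambda_{\max} \le (n-k)\lambda_{\max}$ together with the denominator being $\ge (n-k) + \sum_{x_i\in R}\lambda_i$ when $R = X\setminus T$, yielding $\frac{(n-k)\lambda_{\max}}{(n-k) + (n-k)\lambda_{\max}} = \frac{\lambda_{\max}}{1+\lambda_{\max}}$, which is exactly the claimed bound. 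The monotonicity check $\frac{a}{b+a} \le \frac{c}{d+c}$ when $a \le c$, $b \ge d$ (with $a/b \le c/d$) handles the case $R \subsetneq X \setminus T$.

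**Main obstacle.** The only real subtlety is making the last inequality rigorous in all cases: we need $\frac{\sum_{x_i\in R}\lambda_i}{(n-k) + \sum_{x_i\in X\setminus T}\lambda_i} \le \frac{\lambda_{\max}}{1+\lambda_{\max}}$. Since $\sum_{x_i\in R}\lambda_i \le \sum_{x_i\in X\setminus T}\lambda_i$ and the map $t \mapsto \frac{t}{(n-k)+t + c}$ (for the fixed remainder $c = \sum_{X\setminus T}\lambda_i - \sum_R \lambda_i \ge 0$) is increasing in $t$, the worst case is $R = X\setminus T$. Then I would use $\sum_{x_i\in X\setminus T}\lambda_i \le (n-k)\lambda_{\max}$ and the increasing map $t \mapsto \frac{t}{(n-k)+t}$ once more to conclude. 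I expect the combinatorics of $R$ to be a non-issue — we never even need the exact size of $R$, only the trivial bound $R \subseteq X\setminus T$ — so the entire proof reduces to two applications of the elementary fact that $t \mapsto t/(a+t)$ is increasing for $a > 0$.
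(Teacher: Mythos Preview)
Your proposal is correct and, once the unnecessary matroid detour is discarded, it is exactly the paper's argument: bound the rejection probability by the probability of proposing \emph{any} element of $X\setminus T$ (i.e.\ take $R=X\setminus T$), compute that the total $\nu$-weight of $Y\setminus T$ equals $n-k$, and conclude via $\frac{\sum_{x_i\in X\setminus T}\lambda_i}{(n-k)+\sum_{x_i\in X\setminus T}\lambda_i}\le \frac{\lambda_{\max}}{1+\lambda_{\max}}$ from $\sum_{x_i\in X\setminus T}\lambda_i\le (n-k)\lambda_{\max}$. Note that your first ``monotonicity'' step is overcomplicated---the denominator $(n-k)+\sum_{x_i\in X\setminus T}\lambda_i$ does not depend on $R$ at all, so passing to $R=X\setminus T$ just increases the numerator---and no matroid structure (augmentation, rank, etc.) is used anywhere.
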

\begin{proof}
  Suppose $\abs{T\cap X} = k$.
  Note that if $e \in Y\setminus T$, then $(T\cup \{e\})\cap X \in \+I$.
  This means that
  \begin{align*}
    \Pr[e\sim \nu]{(T\cup \{e\})\cap X \not\in \+I} & \le \sum_{x_i\in X\setminus T}\frac{\lambda_i}{\sum_{x_i\in X\setminus T}\lambda_i+\sum_{y\in Y\setminus T}\frac{n-k}{1+k}}\\
    & = \frac{\sum_{x_i\in X\setminus T}\lambda_i}{\sum_{x_i\in X\setminus T}(1+\lambda_i)} \le \frac{\lambda_{\max}}{1 + \lambda_{\max}},
  \end{align*}
  where the equality is due to $\abs{Y\cap T}+k=n-1$.
%
\end{proof}

Now, we are ready to prove \Cref{thm:main-theorem}.
\begin{proof}[Proof of \Cref{thm:main-theorem}]
  Our algorithm is just running \Cref{alg:down-up-walk-pi} for $O(n\log(n/\epsilon))$ steps and then output $S\cap X$.
  Line 2 of \Cref{alg:down-up-walk-pi} is implemented by \Cref{alg:rej-sampling}, to get a random state $S$.
  By \Cref{lem:mix-gp} and \Cref{fact:correct-rej-sampling}, it holds that $\DTV{S\cap X}{\mu} \leq \epsilon$.

  To implement \eqref{eq:nu}, we maintain two balanced binary search trees $T_X$ and $T_Y$ that keep track of the weight of each node and the sum of weights in each subtree.
  The first tree $T_X$ maintains elements $x_i \in X\setminus T$ each assigned with weight $\lambda_i$,
  and the second tree $T_Y$ maintains elements $y_i \in Y\setminus T$ with weight $1$.
  
  To produce an $e \sim \nu$, first choose tree $T_Z \in \{T_X, T_Y\}$ randomly according to the following law:
  \begin{align*}
    T_Z =
    \begin{cases}
      T_X & \text{with prob.} \propto \sum_{x_i \in X\setminus T} \lambda_i \\
      T_Y & \text{with prob.} \propto n - \abs{T\cap X}
    \end{cases},
  \end{align*}
  where we note that $\sum_{x_i \in X\setminus T} \lambda_i$, and $\abs{T\cap X}$ could be obtained by a constant number of queries via the binary search trees.
  To sample an element $e \in T_Z$ according to the weights of each element, we may consider a binary search algorithm on $T_Z$ that runs in $O(\log n)$ time. 
  We initialize a variable $e$ with the root of $T_Z$, and then repeat the following procedure:
  \begin{enumerate}
    \item Let $L$ be the sum of weights in the left subtrees of $e$, $R$ be the sum of weights in the right subtrees, and $W$ be the weights of $e$;
    \item Sample a real number $0 < x < L+R+W$ uniformly at random. If $x < W$, return $e$; else if $x < W+L$, update $e$ to the left child of $e$; otherwise, update $e$ to the right child of $e$.
  \end{enumerate}
  Finally, by \Cref{lem:rej-ub}, the rejection sampling procedure in \Cref{alg:rej-sampling} runs within $O(1 + \lambda_{\max})$ rounds in expectation.
  Also note that in each round of the rejection sampling, we need $t_{\+O_I}$ time to query the independence oracle.
  Together, the algorithm runs in
  \[O((1 + \lambda_{\max}) n \log (n/\epsilon) (\log n + t_{\+O_I}))\]
  time in expectation.
\end{proof}

\section{Random cluster models with \texorpdfstring{$q\le 1$}{q<=1}}
\label{sec:RC}

Once again, let $\+M=(X,\+I)$ be a matroid, equipped with a rank function $\rk(\cdot)$.
For $i\in[n]$, let $\lambda_i>0$ be the weight or external field of $x_i\in X$.
Let $0\le q\le 1$ be a parameter.
Recall the definition of the random cluster model \cite{FK72} in \eqref{eqn:RC}.
Note that when $q=0$, the support of $\pi_{RC}$ are all subsets of full rank,
namely they are the complements of the independent sets of the dual matroid.

Similar to \Cref{sec:mixing},
as the distribution is not homogeneous, we want to polarize it.
Let $Y$ be a set of $n=\abs{X}$ auxiliary variables.
For $T\subseteq X\cup Y$ such that $\abs{T}=n$,
the polarized distribution
\begin{align}  \label{eqn:RC-polar}
  \RCpolar(T) \propto \frac{q^{-\rk(T\cap X)} \prod_{x_i\in T\cap X}\lambda_i}{\binom{n}{\abs{T\cap Y}}}.
\end{align}
Let the right hand side of \eqref{eqn:RC-polar} be $\wt(T)$.
Note that the marginal distribution of $\RCpolar$ on $X$ is the same as $\pi_{RC,q}$.

Once homogenized, we may consider the up-down walk for $\RCpolar$.
For the up step, we uniformly add an element from $(X\cup Y) \setminus T$.
For the down step, suppose the current set is $T$ such that $\abs{T}=n+1$.
We want to remove an element $e\in T$ with probability proportional to $\wt(T\setminus\{e\})$.
Namely, the transition probability $p(e)$ satisfies
\begin{align}  \label{eqn:up-prob}
  p(e) \propto 
  \begin{cases}
    \frac{q^{-\rk(T\cap X)} \prod_{x_i\in T\cap X}\lambda_i}{\binom{n}{\abs{T\cap Y}-1}} & \text{if $e\in T\cap Y$;}\\
    \frac{q^{-\rk(T\cap X)} \prod_{x_i\in T\cap X}\lambda_i}{\lambda_j\binom{n}{\abs{T\cap Y}}} & \text{if $e=x_j\in T\cap X$ and $\rk(T\cap X\setminus e)=\rk(T\cap X)$;}\\
    \frac{q^{-\rk(T\cap X)+1} \prod_{x_i\in T\cap X}\lambda_i}{\lambda_j\binom{n}{\abs{T\cap Y}}} & \text{if $e=x_j\in T\cap X$ and $\rk(T\cap X\setminus e)=\rk(T\cap X)-1$.}
  \end{cases}
\end{align}
We may further normalize it to get
\begin{align}  \label{eqn:up-prob-normalize}
  p(e) \propto 
  \begin{cases}
    \frac{n-\abs{T\cap Y}+1}{\abs{T\cap Y}} = \frac{\abs{T\cap X}}{\abs{T\cap Y}} & \text{if $e\in T\cap Y$;}\\
    \lambda_j^{-1} & \text{if $e=x_j\in T\cap X$ and $\rk(T\cap X\setminus e)=\rk(T\cap X)$;}\\
    q\lambda_j^{-1} & \text{if $e=x_j\in T\cap X$ and $\rk(T\cap X\setminus e)=\rk(T\cap X)-1$.}
  \end{cases}
\end{align}
To implement \eqref{eqn:up-prob-normalize},
we may first propose $e\sim \nu$ where 
\begin{align}  \label{eqn:nu}
  \nu(e) \propto
  \begin{cases}
    \frac{\abs{T\cap X}}{\abs{T\cap Y}} & \text{if $e\in T\cap Y$;}\\
    \lambda_j^{-1} & \text{if $e=x_j\in T\cap X$,}
  \end{cases}
\end{align}
and then reject $e\in T\cap X$ with probability $1-q$ if $\rk(T\cap X\setminus \{e\})=\rk(T\cap X)-1$.
Keep doing this until we accept.
To see the efficiency of this implementation.
Let $\+E$ be the event that rejection happens.
Then, 
\begin{align*}
  \Pr{\+E}&\le \sum_{x_j\in T\cap X}\frac{(1-q)\lambda_j^{-1}}{\sum_{x_j\in T\cap X}\lambda_j^{-1}+\sum_{e\in T\cap Y}\frac{\abs{T\cap X}}{\abs{T\cap Y}}}\\
  & =(1-q) \frac{\sum_{x_j\in T\cap X}\lambda_j^{-1}}{\sum_{x_j\in T\cap X}(\lambda_j^{-1}+1)} \le \frac{1-q}{1+\lambda_{\min}},
\end{align*}
where $\lambda_{\min}\defeq \min_{i \in [n]} \lambda_i$.
Thus, in expectation, we will successfully make a transition after $O(1+\lambda_{\min}^{-1})$ steps.

The mixing time of the up-down walk can be bounded in the same way as before.
The homogenized generating polynomial for random cluster models with $q\le 1$ is shown to be strongly log-concave by Br{\"a}nd{\'e}n and Huh \cite{BH18}.
Also note that the up-down walk is just the down-up walk of the dual.
By \Cref{prop:slc-mixing-time}, the mixing time of the up-down walk is $O(n\log (n/\eps))$ as well.

However, there is one difference here that we need to check the rank of $T\cap X$ and $ T\cap X\setminus \{e\}$.
Thus we need a rank oracle instead of the independence oracle.
Putting everything together, we get \Cref{thm:RC}.

\bibliographystyle{alpha}
\bibliography{refs}

\end{document}